\DeclareMathOperator{\Ad}{Ad}
\DeclareMathOperator{\SU}{SU}
\DeclareMathOperator{\diag}{diag}
\newlength\tindent
\newtheorem{theorem}{Theorem}[section]
\newtheorem{lemma}[theorem]{Lemma}
\newtheorem{proposition}[theorem]{Proposition}
\newtheorem{corollary}[theorem]{Corollary}
\theoremstyle{definition}
\newtheorem{definition}[theorem]{Definition}
\newtheorem{example}[theorem]{Example}
\theoremstyle{remark}
\newtheorem{remark}[theorem]{Remark}
\numberwithin{equation}{section}
\begin{document}
\setcounter{page}{1}

\title[Optimal-Speed Unitary Evolution of Quantum States]{Characterizing Optimal-speed unitary time evolution of pure and quasi-pure quantum states}

\author[John A. Mora Rodríguez]{John A. Mora Rodríguez}
\address{John A. Mora Rodríguez \endgraf
IMECC-Unicamp, Departamento de Matemática Aplicada, Universidade Estadual de Campinas (Unicamp), Rua Sérgio Buarque de Holanda 651, Cidade Universitária Zeferino Vaz. 13083-859, Campinas - SP, Brazil
\endgraf
  {\it E-mail address} {\rm j196970@dac.unicamp.br}
 }

\author[B. Grajales]{Brian Grajales}
\address{Brian Grajales \endgraf
IMECC-Unicamp, Departamento de Matem\'{a}tica,  Universidade Estadual de Campinas (Unicamp). Rua S\'{e}rgio Buarque de Holanda 651, Cidade Universit\'{a}ria Zeferino Vaz. 13083-859, Campinas - SP, Brazil.
\endgraf
{\it E-mail address} {\rm grajales@ime.unicamp.br}
  }

\author[M. {Terra Cunha}]{Marcelo {Terra Cunha}}
\address{Marcelo {Terra Cunha} \endgraf
IMECC-Unicamp, Departamento de Matemática Aplicada,  Universidade Estadual de Campinas (Unicamp). Rua S\'{e}rgio Buarque de Holanda 651, Cidade Universit\'{a}ria Zeferino Vaz. 13083-859, Campinas - SP, Brazil.
\endgraf
{\it E-mail address} {\rm tcunha@unicamp.br}
  }

\author[L. Grama]{Lino Grama}
\address{Lino Grama \endgraf
IMECC-Unicamp, Departamento de Matemática,  Universidade Estadual de Campinas (Unicamp). Rua S\'{e}rgio Buarque de Holanda 651, Cidade Universit\'{a}ria Zeferino Vaz. 13083-859, Campinas - SP, Brazil.
\endgraf
{\it E-mail address} {\rm lino@ime.unicamp.br}
  }


 \keywords{Optimal-speed unitary time evolution, quantum state, flag manifold, equigeodesic}
     \subjclass[2020]{81R99, 53C30}

\begin{abstract}

We present a characterization of the Hamiltonians that generate optimal-speed unitary time evolution and the associated dynamical trajectory, where the initial states are either pure states or quasi-pure quantum states. We construct the manifold of pure states as an orbit under the conjugation action of the Lie group $\SU(n)$ on the manifold of one-dimensional orthogonal projectors, obtaining an isometry with the flag manifold $\SU(n)/\textnormal{S}(\textnormal{U}(1)\times \textnormal{U}(n-1 ))$. From this construction, we show that Hamiltonians generating optimal-speed time evolution are fully characterized by equigeodesic vectors of $\SU(n)/\textnormal{S}(\textnormal{U}(1)\times \textnormal{U}(n-1))$. We later extend that result to quasi-pure quantum states.
\end{abstract}

\maketitle

\tableofcontents
\allowdisplaybreaks

\section{Introduction}

Let us consider $\mathcal{H}_n$ an $n$-dimensional complex Hilbert space and $\{|0\rangle,|1\rangle,\ldots,|n-1\rangle\}$ a basis for $\mathcal{H}_n$. Each unitary vector $|\phi\rangle\in\mathcal{H}_n$ is called a vector quantum state. The 
 more general definition of a quantum state is not limited to unitary vectors of the Hilbert space, each positive semi-definite operator of trace 1 on $\mathcal{H}_n$ is also called a quantum state. The set of one-dimensional projectors and the set of projective rays of vector quantum states are biunivocally related. We refer to the elements of these two sets as pure states, depending on the context. We will use the term vector state to refer to vectors and quantum states for operators. We denote by $\langle \phi|$ the linear functional associated with state $|\phi\rangle$ by the inner product. These definitions are well known in the usual formulation of quantum mechanics \cite{Cohen, Nielsen, Peres, BZ, Terra}. We know that if we take a basis for the Hilbert space, each state is represented by a unit vector in $\mathbb{C}^n$. Let us take a basis for $\mathcal{H}_n$ such that $|0\rangle=(1,\vec{0})$. \\

It is well known that for an isolated system, the time evolution of an initial vector state is determined by the Schrödinger equation \cite{Cohen, BZ}.

\begin{equation}
     \displaystyle\frac{d}{dt}|\phi(t)\rangle=\frac{H}{i\hslash
}|\phi(t)\rangle,
\end{equation}

where $H$ is a time-independent Hamiltonian operator on $\mathcal{H}_n$ and $\hslash$ is the Planck's constant. Therefore, the equation $|\phi(t)\rangle=\exp\left(\frac{-iHt}{\hslash}\right)|\phi(0)\rangle$ determines the dynamical trajectory of a state $|\phi(0)\rangle$. Since the evolution must conserve the normalization of the states, we consider $H$ to be Hermitian. Additionally, we disregard the phases in the unitary transformation generated by $H$; that is, we will consider $H$ a traceless operator. This concept of unitary time evolution is naturally extended to quantum states, for which the dynamic trajectory of a quantum state $\rho$ is determined by the equation

\begin{equation}
    \rho(t)=\exp\left(\frac{-iHt}{h}\right)\rho \exp\left(\frac{iHt}{h}\right).
\end{equation}

If we consider two vector states such that one evolves into the other in a given time $T$, that is, for $|\phi\rangle, |\psi\rangle$ we have $|\psi\rangle=\exp\left(\frac{-iHT}{\hslash
}\right)|\phi\rangle$ for some Hamiltonian $H$, we could then ask, what is the smallest value of $T$?, that is, what is the shortest time to transform $|\phi\rangle$ into the vector state $|\psi\rangle$? 
This is a problem that is known as \textit{quantum speed limits} or simply QSL \cite{MT,ML}.\\

In this work, we focus on the complementary problem: which Hamiltonians determine the unitary time evolution that transforms $|\phi\rangle$ into $|\psi\rangle$ in the shortest possible time? The time evolution generated by such types of Hamiltonians is known as optimal-speed unitary time evolution \cite{Carlini,Vaidman,BH, Mo, Brody, Bender, Cafaro, RCB}. Several works in this area share a common focus: they highlight the importance of the geometric structure of the vector state set in analyzing quantum time evolution, in particular the study of geodesics on the Bloch sphere \cite{Mo,Cafaro}. It is also important to note that we are considering the most general form of the problem, assuming Hamiltonians without restrictions. The case where the Hamiltonian that generates unitary time evolution is subject to certain constraints to ensure practical implementation is a well-known variational problem \cite{Carlini}.\\

A criterion with which we can find Hamiltonians generating optimal-speed unitary time evolution was provided by Mostafazadeh \cite{Mo}, establishing that the evolution time $T$ between two vector states $|\phi\rangle, |\psi\rangle$ is determined by: 

\begin{equation}\label{eq:OSTE}
        T=\frac{\hslash
s}{\Delta E_{\phi}(H)},
    \end{equation}

where $s$ is the geodesic distance (in the Fubini-Study metric \cite{AA}) between $|\phi\rangle$ and $|\psi\rangle$, and  $\Delta E_{\phi}(H):=\left[\langle\phi|H^2|\phi\rangle-\langle\phi|H|\phi\rangle^2\right]^{1/2}$ is the energy uncertainty of $H$ in the initial vector state $|\phi\rangle$. We denote by $\Delta E_{\max}(H)$ the maximum energy uncertainty of $H$. In that paper, Mostafazadeh also determines the speed of evolution and indicates that, to achieve maximum speed, the choice of $H$ must be such that $\Delta E_{\phi}(H)=\Delta E_{\max}(H)$.\\

In this paper, we provide a characterization of the maximum energy-uncertainty Hamiltonians that connects two given pure (or quasi-pure) quantum states.
For this purpose, we use the fact that the set of quantum states has a Riemannian structure in which a Lie group acts transitively. 
These manifolds, known as homogenous spaces, have been extensively studied \cite{Berger,Kostant,KV}. 
One significant advantage of working in this context is that, when considering Riemannian metrics that preserve the symmetries given by a Lie group, many geometric problems can be reduced to solving algebraic equations on the Lie algebra associated to the Lie group. 
For example, one can determine infinitesimal generators for homogeneous equigeodesics (see Proposition \ref{criterion}), which are vectors in the Lie algebra whose one-parameter subgroup orbits are geodesics for each Riemannian metric preserving the symmetries of the Lie group.
The notion of equigeodesic is important for classical mechanics, since homogeneous geodesics describe relative equilibrium of the dynamical system, represented by an invariant metric, and we consider the case when this property is stable under changes of the metric.
\\

Building on this foundation, we introduce a new interpretation of equigeodesics, offering a novel perspective on their role in quantum theory. We will use the criterion expressed in equation \eqref{eq:OSTE} and the study of the relationship between homogeneous spaces and sets of quantum states to explore the connection between equigeodesics and dynamic trajectories of optimal-speed unitary time evolution. This approach not only bridges abstract geometric concepts with quantum dynamics but also opens up new ways to understand the basic nature of quantum state changes.

\section{Preliminaries: homogeneous spaces}
\subsection{Invariant metrics} Consider a homogeneous space of the form $G/K,$ where $G$ is a compact finite-dimensional Lie group and $K$ is a closed subgroup of $G.$ A Riemannian metric $g$ defined on $G/K$ is called {\it $G$-invariant} (or {\it $G$-homogeneous}) if the map 
\begin{center}
    $\begin{array}{rccc}
\phi_a: & (G/K,g) & \longrightarrow & (G/K,g)\\
& xK & \longmapsto 
 & axK
\end{array}$
\end{center}
is an isometry for each $a\in G.$ Let $\mathfrak{g}$ and $\mathfrak{k}$ be the Lie algebras of $G$ and $K,$ respectively, and fix an $\Ad^G$-invariant inner product $(\cdot,\cdot)$ on $\mathfrak{g},$ that is, $$(\Ad(a)X,\Ad(a)Y)=(X,Y),\ a\in G,\ X,Y\in\mathfrak{g},$$ where ``$\Ad$" denotes the adjoint representation of $G.$ If $\mathfrak{m}$ is the $(\cdot,\cdot)$-orthogonal complement of $\mathfrak{k}$ in $\mathfrak{g},$ then the decomposition $\mathfrak{g}=\mathfrak{k}\oplus\mathfrak{m}$ is reductive, meaning that $\Ad(k)\mathfrak{m}=\mathfrak{m},$ for all $k\in K.$ This allows us to define a smooth representation \begin{equation*}
    \begin{array}{rccl}
    \Ad^K\big{|}_{\mathfrak{m}}: & K & \longrightarrow & \textnormal{GL}(\mathfrak{m})\\
    & k & \longmapsto & \Ad(k)\big{|}_{\mathfrak{m}}
    \end{array}.
\end{equation*} By compactness of $K$, we have that this representation is completely reducible, which means that there exist subspaces $\mathfrak{m}_1,...,\mathfrak{m}_s$ such that $$\mathfrak{m}=\mathfrak{m}_1\oplus\cdots\oplus\mathfrak{m}_s,\ \textnormal{and}\ \Ad(k)\mathfrak{m}_j=\mathfrak{m}_j,\ \forall k\in K,\ \forall j\in\{1,...,s\}.$$

In this context, the tangent space $T_{eK}(G/K)$ at the origin $eK$ can be identified with $\mathfrak{m}$ via the isomorphism $$\displaystyle\mathfrak{m}\ni X\longmapsto X^*(eK):=\left.\frac{d}{dt}\exp(tX)K\right|_{t=0}\in T_{eK}(G/K),$$ where ``$\exp$" is the exponential map of $G.$ It is well known that the set of $G$-invariant metrics on $G/K$ is in bijection with the set of $\Ad^K$-invariant inner products on $\mathfrak{m},$ which in turn is in bijection with the set of positive, $(\cdot,\cdot)$-selfadjoint and $\Ad^K$-equivariant operators $\Lambda:\mathfrak{m}\longrightarrow\mathfrak{m}.$ An operator with these properties is usually called a {\it metric operator}. 

\subsection{Homogeneous equigeodesics} A smooth path $\gamma$ on $G/K$ is a {\it homogeneous curve} through $aK$ if there exist $X\in\mathfrak{g}$ such that
\begin{equation*}
    \gamma(t)=\exp(tX)aK,\ \forall t\in \mathbb{R}.
\end{equation*}
A {\it homogeneous equigeodesic} through $aK$ is a homogeneous curve through $aK$ which is a geodesic with respect to any $G$-invariant metric on $G/K.$ In this case, we say that $X\in\mathfrak{g}$ is an {\it equigeodesic vector} generating an equigeodesic through $aK.$

\begin{proposition}{\label{prop:equigeodesic}}
    Let $X\in \mathfrak{g}$ be an equigeodesic vector generating an equigeodesic curve through the origin, then $\Ad(a)X$ is an equigeodesic vector generating an equigeodesic curve through $aK.$ 
\end{proposition}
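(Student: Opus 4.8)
The plan is to reduce everything to two standard facts: (i) for every $G$-invariant metric $g$ on $G/K$ the left-translation $\phi_a$ is an isometry (this is the definition of $G$-invariance), and (ii) an isometry of a Riemannian manifold sends affinely parametrized geodesics to affinely parametrized geodesics, since it preserves the Levi-Civita connection. First I would record the elementary Lie-group identity
\[
\exp\bigl(t\,\Ad(a)X\bigr)=a\,\exp(tX)\,a^{-1},\qquad t\in\mathbb{R},
\]
which follows from the naturality of $\exp$ under the conjugation automorphism $C_a=a(\cdot)a^{-1}$ of $G$ (both sides are one-parameter subgroups with the same derivative $\Ad(a)X$ at $t=0$). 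Note also that $\Ad(a)X\in\mathfrak{g}$, so it is a legitimate candidate for an equigeodesic vector.

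Next I would use this identity to rewrite the candidate homogeneous curve. Setting $\tilde\gamma(t):=\exp\bigl(t\,\Ad(a)X\bigr)aK$ and letting $\gamma(t):=\exp(tX)K$ be the equigeodesic through the origin generated by $X$, we obtain
\[
\tilde\gamma(t)=a\,\exp(tX)\,a^{-1}\,aK=a\,\exp(tX)\,K=\phi_a\bigl(\gamma(t)\bigr).
\]
In particular $\tilde\gamma$ is, by construction, a homogeneous curve through $aK$ with generator $\Ad(a)X$, and $\tilde\gamma(0)=aK$. Then I would fix an arbitrary $G$-invariant metric $g$ on $G/K$: by hypothesis $\gamma$ is a geodesic of $(G/K,g)$, and since $\phi_a\colon(G/K,g)\to(G/K,g)$ is an isometry, $\tilde\gamma=\phi_a\circ\gamma$ is again a geodesic of $(G/K,g)$. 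As $g$ was an arbitrary $G$-invariant metric, $\tilde\gamma$ is a geodesic for \emph{every} $G$-invariant metric, i.e.\ a homogeneous equigeodesic through $aK$; equivalently, $\Ad(a)X$ is an equigeodesic vector generating an equigeodesic through $aK$, which is the assertion.

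I do not expect a genuine obstacle here: the whole argument is the conjugation identity for $\exp$ combined with isometry-invariance of the geodesic equation. The only point deserving a line of care is the parametrization: because $\phi_a$ is a true isometry rather than merely a conformal or projective map, it preserves constant-speed geodesics with their affine parameter, so $\tilde\gamma(t)=\exp\bigl(t\,\Ad(a)X\bigr)aK$ is itself a geodesic in the given parametrization, with no rescaling of $t$ needed; hence the conclusion holds verbatim as stated in Proposition~\ref{prop:equigeodesic}.
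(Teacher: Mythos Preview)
Your proof is correct and follows essentially the same route as the paper: both rewrite $\exp(t\Ad(a)X)aK$ as $\phi_a(\exp(tX)K)$ via the conjugation identity $\exp(t\Ad(a)X)=a\exp(tX)a^{-1}$, and then invoke that $\phi_a$ is an isometry for every $G$-invariant metric to carry the geodesic property over. Your extra remarks on the naturality of $\exp$ and on preservation of the affine parametrization are sound but not needed beyond what the paper uses.
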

\begin{proof} Let $\gamma_1(t):=\exp(tX)K,$ $\gamma_2(t):=\exp(t\Ad(a)X)aK,$ and assume that $\gamma_1$ is a geodesic with respect to any $G$-invariant metric. Since $\phi_a$ is an isometry for any $G$-invariant metric, then $\phi_a\circ \gamma_1$ is also a geodesic with respect to any $G$-invariant metric. However
\begin{align*}
    (\phi_a\circ \gamma_1)(t)&=\phi_a(\exp(tX)K)\\
    &=a\exp(tX)K\\
    &=a\exp(tX)a^{-1}aK\\
    &=\exp(t\Ad(a)X)aK\\
    &=\gamma_2(t).
\end{align*}
From this, it follows that $\gamma_2$ is an equigeodesic through $aK$ whenever $\gamma_1$ is an equigeodesic through $eK.$ The proof is complete.

\end{proof}
The proposition below provides an effective criterion for determining whether a vector $X\in\mathfrak{g}$ is an equigeodesic vector. This criterion is a slightly modified version of the one established by Cohen, Grama, and Negreiros \cite[Proposition 3.5]{CGN}. For a proof, we refer to \cite[Corollary 3.3]{GG}. Before presenting the result, let us set some notations: for a vector $X\in\mathfrak{g}$ and a linear subspace $V\subseteq \mathfrak{g},$ we will denote $X_V$ the $(\cdot,\cdot)$-orthogonal projection of $X$ onto $V.$

\begin{proposition}\label{criterion}
     A vector $X\in\mathfrak{g}$ is an equigeodesic vector generating an equigeodesic through the origin if and only if \begin{equation}\label{CGN_formula}
            \left[X,\Lambda X_{\mathfrak{m}}\right]_{\mathfrak{m}}=0,
        \end{equation}
        for any metric operator $\Lambda.$ Here $[\cdot,\cdot]$ is the Lie bracket of $\mathfrak{g}.$
\end{proposition}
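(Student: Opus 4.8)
The plan is to derive the criterion from the classical characterization of geodesic vectors on a reductive homogeneous space, applied simultaneously to every invariant metric. Fix a metric operator $\Lambda$ and write $\langle U,V\rangle_\Lambda:=(\Lambda U,V)$ for $U,V\in\mathfrak{m}$; under the identification $\mathfrak{m}\cong T_{eK}(G/K)$ this is the inner product inducing the $G$-invariant metric $g_\Lambda$. The first step is to recall (or reprove) the \emph{geodesic lemma}: the homogeneous curve $\gamma(t):=\exp(tX)K$ is a $g_\Lambda$-geodesic if and only if
\begin{equation*}
\langle [X,Y]_{\mathfrak{m}},X_{\mathfrak{m}}\rangle_\Lambda=0\qquad\text{for all }Y\in\mathfrak{m}.
\end{equation*}
One obtains this by noting that $\exp(tX)$ acts by $g_\Lambda$-isometries sending $\gamma(0)$ to $\gamma(t)$ and $\dot\gamma(0)$ to $\dot\gamma(t)$, so $\gamma$ is a geodesic iff $(\nabla_{\dot\gamma}\dot\gamma)(0)=0$; evaluating this at the origin with Nomizu's formula for the Levi-Civita connection of $(G/K,g_\Lambda)$, together with $[X,X]=0$, collapses the geodesic equation to the displayed identity.

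The second step is a purely algebraic translation of the geodesic lemma into \eqref{CGN_formula}. Since $\Lambda$ is $(\cdot,\cdot)$-selfadjoint with $\Lambda X_{\mathfrak{m}}\in\mathfrak{m}$, while $[X,Y]_{\mathfrak{k}}$ is $(\cdot,\cdot)$-orthogonal to $\mathfrak{m}$, one has
\begin{equation*}
\langle [X,Y]_{\mathfrak{m}},X_{\mathfrak{m}}\rangle_\Lambda=(\Lambda[X,Y]_{\mathfrak{m}},X_{\mathfrak{m}})=([X,Y]_{\mathfrak{m}},\Lambda X_{\mathfrak{m}})=([X,Y],\Lambda X_{\mathfrak{m}}),
\end{equation*}
and by $\Ad^G$-invariance of $(\cdot,\cdot)$ (equivalently, skew-symmetry of $\ad$) the right-hand side equals $-(Y,[X,\Lambda X_{\mathfrak{m}}])=-(Y,[X,\Lambda X_{\mathfrak{m}}]_{\mathfrak{m}})$, using $Y\in\mathfrak{m}$. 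Hence, for this fixed $\Lambda$, $\gamma$ is a $g_\Lambda$-geodesic iff $(Y,[X,\Lambda X_{\mathfrak{m}}]_{\mathfrak{m}})=0$ for all $Y\in\mathfrak{m}$, and since $(\cdot,\cdot)$ restricts to an inner product on $\mathfrak{m}$ this is equivalent to $[X,\Lambda X_{\mathfrak{m}}]_{\mathfrak{m}}=0$. Finally, $X$ is an equigeodesic vector through the origin exactly when $\gamma$ is a geodesic for \emph{every} $G$-invariant metric, i.e.\ for every metric operator $\Lambda$; combining this with the equivalence just obtained proves both implications of the proposition at once.

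I expect the main obstacle to be the geodesic lemma itself, specifically the fact that $X$ need not lie in $\mathfrak{m}$: the Killing field generated by $X$ and the fundamental field $X_{\mathfrak{m}}^{*}$ coincide only at the origin, and $X_{\mathfrak{k}}$ acts trivially at $eK$ but not in a neighborhood, so some care with Nomizu's formula is required to see that the $t=0$ geodesic equation reduces exactly to $\langle[X,Y]_{\mathfrak{m}},X_{\mathfrak{m}}\rangle_\Lambda=0$. Everything after that is routine bookkeeping with the reductive decomposition $\mathfrak{g}=\mathfrak{k}\oplus\mathfrak{m}$ and the invariance of $(\cdot,\cdot)$. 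Alternatively, one may take the geodesic lemma as known and simply invoke \cite{CGN,GG}, where this computation is performed in an equivalent normalization; the present statement is then its specialization to the full family of metric operators.
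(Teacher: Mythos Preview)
Your proof sketch is correct and follows the standard route via the Kowalski--Vanhecke geodesic lemma together with the $\ad$-skew-symmetry of the $\Ad^G$-invariant form, which is exactly the argument carried out in the references \cite{CGN,GG}. The paper itself does not prove this proposition but simply cites \cite[Corollary~3.3]{GG}, so your proposal is essentially a reconstruction of that cited proof rather than a different approach.
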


\subsection{Flag manifolds} In this section, we consider the flag manifold \begin{equation}\label{quotient}\mathbb{F}(n;n_1,...,n_t):=\textnormal{SU}(n)/\textnormal{S}(\textnormal{U}(n_1)\times\cdots\times\textnormal{U}(n_t)),\end{equation} where $n_1+\cdots+n_t=n.$ We will use the notation $[U]=U\textnormal{S}(\textnormal{U}(n_1)\times\cdots\times\textnormal{U}(n_t)),$ for the class of $U\in\textnormal{SU}(n)$ in the quotient \eqref{quotient}. The Lie algebra $\mathfrak{su}(n)$ of $\textnormal{SU}(n)$ consists of skew-Hermitian matrices of order $n.$ We define $(\cdot,\cdot):=-B,$ where $B$ is the Killing form of $\mathfrak{su}(n).$ A reductive $(\cdot,\cdot)$-orthogonal decomposition is given by $$\mathfrak{su}(n)=\mathfrak{s}(\mathfrak{u}(n_1)\oplus\cdots\oplus\mathfrak{u}(n_t))\oplus\mathfrak{m},$$
where $\mathfrak{s}(\mathfrak{u}(n_1)\oplus\cdots\oplus\mathfrak{u}(n_t))$ is the set of block-diagonal matrices 

$$\left(\begin{array}{cccc}X_{11}&0_{n_1\times n_2} & \cdots & 0_{n_1\times n_t}\\ 0_{n_2\times n_1}& X_{22} & \cdots & 0_{n_2\times n_t}\\\vdots & \vdots & \ddots & \vdots\\ 0_{n_t\times n_1}&0_{n_t\times n_2}&\cdots&X_{tt}\end{array}\right),$$
with $X_{jj}^*=-X_{jj}$ and $\sum\limits_{j=1}^{t}\textnormal{Tr}(X_{jj})=0;$ and $\mathfrak{m}$ is the set of matrices of the form
\begin{equation}\label{tangent:vector}\left(\begin{array}{ccccc}0_{n_1\times n_1}&-X_{21}^*&-X_{31}^*&\cdots&-X_{t1}^*\\
X_{21}&0_{n_2\times n_2}&-X_{32}^*&\cdots&-X_{t2}^*\\
X_{31}&X_{32}&0_{n_3\times n_3}&\cdots&-X_{t3}^*\\
\vdots&\vdots&\vdots&\ddots&\vdots\\
X_{t1}&X_{t2}&X_{t3}&\cdots&0_{n_t\times n_t}\end{array}\right).
\end{equation}
The decomposition of $\mathfrak{m}$ into irreducible  $\Ad^{\textnormal{S}(\textnormal{U}(n_1)\times\cdots\times\textnormal{U}(n_t))}$-invariant subspaces is
$$\mathfrak{m}=\bigoplus\limits_{1\leq j<i\leq t}\mathfrak{m}_{ij},$$ where $\mathfrak{m}_{ij}$ is the set of matrices of the form \eqref{tangent:vector} such that $X_{rs}=0_{n_r\times n_s},$ for $(r,s)\neq (i,j).$ It is well known that these subspaces $\mathfrak{m}_{ij}$ are pairwise inequivalent, consequentely, any metric operator $\Lambda$ is determined by a family $\{\mu_{ij}:1\leq j< i\leq t\}$ of positive numbers, such that $\Lambda$ applied to a matrix of the form \eqref{tangent:vector} multiplies each block $X_{ij}$ by $\mu_{ij}.$

\begin{proposition}[\cite{CGN}]\label{CGN:theorem} 
A matrix of the form \eqref{tangent:vector} is an equigeodesic vector generating an equigeodesic through $[\textnormal{Id}]$ if and only if $X_{ij}X_{jk}=0_{n_i\times n_k}$ for all distinct $i,j,k.$    
\end{proposition}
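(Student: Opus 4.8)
The plan is to invoke the algebraic criterion of Proposition \ref{criterion}. Any matrix $X$ of the form \eqref{tangent:vector} already belongs to $\mathfrak{m}$, so $X_{\mathfrak{m}}=X$, and $X$ generates an equigeodesic through the origin $[\textnormal{Id}]=eK$ if and only if $[X,\Lambda X]_{\mathfrak{m}}=0$ for every metric operator $\Lambda$. Decompose $X=\sum_{1\le j<i\le t}X^{(ij)}$, where $X^{(ij)}\in\mathfrak{m}_{ij}$ has only the blocks $X_{ij}$ in position $(i,j)$ and $-X_{ij}^{*}$ in position $(j,i)$. Since $\Lambda$ acts on $\mathfrak{m}_{ij}$ as multiplication by $\mu_{ij}>0$, we get $\Lambda X=\sum_{j<i}\mu_{ij}X^{(ij)}$ and therefore
\begin{equation*}
[X,\Lambda X]=\sum_{j<i}\ \sum_{l<k}\mu_{kl}\,[X^{(ij)},X^{(kl)}].
\end{equation*}
The whole problem thus reduces to computing the brackets $[X^{(ij)},X^{(kl)}]$ and separating their $\mathfrak{k}$- and $\mathfrak{m}$-parts.

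For this I would use the block-unit calculus: writing $E_{ab}(A)$ for the matrix with $A$ in the $(a,b)$ block and zeros elsewhere, one has $E_{ab}(A)E_{cd}(B)=\delta_{bc}E_{ad}(AB)$, and $X^{(ij)}=E_{ij}(X_{ij})-E_{ji}(X_{ij}^{*})$. Expanding the commutator and running through the cases according to the overlap of $\{i,j\}$ with $\{k,l\}$ yields: (a) if the two pairs are disjoint, or equal, the bracket vanishes; (b) if they share exactly one index $m$, then $[X^{(ij)},X^{(kl)}]$ already lies in $\mathfrak{m}$ — indeed in the single summand $\mathfrak{m}_{pq}$ with $\{p,q\}=(\{i,j\}\cup\{k,l\})\setminus\{m\}$ — and its nonzero off-diagonal block is, up to sign, the two-fold product $X_{pm}X_{mq}$, where we use the convention $X_{ab}:=-X_{ba}^{*}$ when $a<b$. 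Summing over the index pairs and collecting into the irreducible summands, one finds that for fixed $p>q$ the $\mathfrak{m}_{pq}$-component of $[X,\Lambda X]$ equals
\begin{equation*}
\sum_{m\neq p,q}\bigl(\mu_{\{m,q\}}-\mu_{\{m,p\}}\bigr)\,C^{pq}_{m},\qquad C^{pq}_{m}=\pm\,X_{pm}X_{mq},
\end{equation*}
where $\mu_{\{a,b\}}$ denotes the parameter of the summand associated with the pair $\{a,b\}$.

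It remains to show that all these components vanish for every metric operator if and only if $X_{ij}X_{jk}=0$ for all distinct $i,j,k$. The ``if'' direction is immediate, since each $C^{pq}_{m}$ is, up to sign, one of the products $X_{ij}X_{jk}$. For ``only if'', fix distinct indices $m_0,p,q$ and choose the metric operator with $\mu_{\{m_0,q\}}=2$ and $\mu_{\{a,b\}}=1$ for every other unordered pair $\{a,b\}$; then every summand above with $m\neq m_0$ has coefficient $1-1=0$, so the $\mathfrak{m}_{pq}$-component collapses to $C^{pq}_{m_0}$, which must therefore vanish, i.e.\ $X_{pm_0}X_{m_0q}=0$. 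Letting $m_0,p,q$ range over all distinct triples — and noting that the family of conditions is stable under the adjoint, since $X_{ij}X_{jk}=0\Leftrightarrow X_{kj}X_{ji}=0$ — gives $X_{ij}X_{jk}=0$ for all distinct $i,j,k$, as claimed.

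The main obstacle is step (b): one must carry out the block-unit bookkeeping carefully across the various index-coincidence patterns, keeping track of adjoints and signs, to confirm both that no diagonal ($\mathfrak{k}$-valued) term survives and that the surviving $\mathfrak{m}$-valued block is exactly a product of two of the data blocks. This is precisely what makes the final algebraic condition come out in the clean symmetric form $X_{ij}X_{jk}=0$ rather than in some tangled combination of the $X_{ij}$'s.
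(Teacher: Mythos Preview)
The paper does not actually prove this proposition; it is quoted from \cite{CGN} without argument. Your proof is correct and is essentially the approach taken there: invoke the criterion of Proposition~\ref{criterion}, compute the brackets $[\mathfrak m_{ij},\mathfrak m_{kl}]$ block by block (zero when the index pairs are disjoint, landing in the single summand $\mathfrak m_{pq}$ with $\{p,q\}=(\{i,j\}\cup\{k,l\})\setminus\{m\}$ when they share exactly one index $m$), and then vary the metric parameters $\mu_{ij}$ to isolate each product $X_{pm}X_{mq}$ individually.
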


A valuable particular case appears when we consider $n_1=1$ and $n_2=n-1,$ so the quotient \eqref{quotient} becomes

$$\mathbb{F}(n;1,n-1)=\textnormal{SU}(n)/\textnormal{S}(\textnormal{U}(1)\times\textnormal{U}(n-1)).$$ 

In this case, a vector $X\in\mathfrak{su}(n)$ can be written as

\begin{equation}\label{general:vector}X=\left(\begin{array}{cc}i\alpha&-x^*\\
x&A_{(n-1)\times (n-1)}\end{array}\right),
\end{equation}

where $\alpha$ is a real number, $A^{*}=-A$ and $\textnormal{Tr}(A)=-i\alpha.$ The Lie algebra $\mathfrak{s}(\mathfrak{u}(1)\oplus\mathfrak{u}(n-1))$ consists of all matrices of the form \eqref{general:vector} with $x=0,$ and its $(\cdot,\cdot)$-orthogonal complement $\mathfrak{m}$ is the space of matrices of the form \eqref{general:vector} such that $A=\alpha=0.$ Additionally, $\mathfrak{m}$ is $\Ad^{\textnormal{S}(\textnormal{U}(1)\times\textnormal{U}(n-1))}$-irreducible, thus for  any metric operator $\Lambda,$ there exists $\mu>0$ such that $\Lambda=\mu\textnormal{Id}_{\mathfrak{m}}.$  
\begin{proposition}\label{general:equigeodesics} A matrix $X\in\mathfrak{su}(n)$ written as in \eqref{general:vector} is an equigeodesic vector generating an equigeodesic through $[\textnormal{Id}]$ if and only if $Ax=i\alpha x.$ 
\end{proposition}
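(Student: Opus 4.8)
The plan is to apply the equigeodesic criterion of Proposition \ref{criterion} directly, taking advantage of the fact that here $\mathfrak{m}$ is $\Ad^{\textnormal{S}(\textnormal{U}(1)\times\textnormal{U}(n-1))}$-irreducible. First I would split $X$ according to the reductive decomposition $\mathfrak{su}(n)=\mathfrak{s}(\mathfrak{u}(1)\oplus\mathfrak{u}(n-1))\oplus\mathfrak{m}$, writing $X=X_{\mathfrak{k}}+X_{\mathfrak{m}}$ with
$$X_{\mathfrak{k}}=\left(\begin{array}{cc}i\alpha&0\\0&A\end{array}\right),\qquad X_{\mathfrak{m}}=\left(\begin{array}{cc}0&-x^*\\x&0\end{array}\right).$$
Since $\mathfrak{m}$ is irreducible, every metric operator has the form $\Lambda=\mu\,\textnormal{Id}_{\mathfrak{m}}$ with $\mu>0$, hence $\Lambda X_{\mathfrak{m}}=\mu X_{\mathfrak{m}}$ and condition \eqref{CGN_formula} reduces to the single equation $[X,X_{\mathfrak{m}}]_{\mathfrak{m}}=0$.

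The second step is to remove the projection onto $\mathfrak{m}$. Because $[X_{\mathfrak{m}},X_{\mathfrak{m}}]=0$, one has $[X,X_{\mathfrak{m}}]=[X_{\mathfrak{k}},X_{\mathfrak{m}}]$, and reductivity of the decomposition gives $[X_{\mathfrak{k}},X_{\mathfrak{m}}]\in[\mathfrak{k},\mathfrak{m}]\subseteq\mathfrak{m}$. Therefore $[X,X_{\mathfrak{m}}]_{\mathfrak{m}}=[X_{\mathfrak{k}},X_{\mathfrak{m}}]$, so that $X$ is an equigeodesic vector through $[\textnormal{Id}]$ if and only if $[X_{\mathfrak{k}},X_{\mathfrak{m}}]=0$.

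The final step is a direct block computation. Multiplying out, one finds
$$[X_{\mathfrak{k}},X_{\mathfrak{m}}]=\left(\begin{array}{cc}0&x^*A-i\alpha x^*\\Ax-i\alpha x&0\end{array}\right),$$
which vanishes precisely when $Ax=i\alpha x$ and $x^*A=i\alpha x^*$. Since $A^*=-A$ and $\alpha\in\mathbb{R}$, the second equation is the conjugate transpose of the first, hence automatic once $Ax=i\alpha x$ holds. This gives the claimed equivalence. I do not anticipate any genuine obstacle; the only points requiring a little care are justifying that $[X,X_{\mathfrak{m}}]$ already lies in $\mathfrak{m}$ (so the projection is transparent) and observing that the two off-diagonal blocks of the bracket encode equivalent conditions.
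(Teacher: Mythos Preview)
Your proof is correct and follows essentially the same route as the paper: reduce the equigeodesic criterion \eqref{CGN_formula} to $[X,X_{\mathfrak{m}}]_{\mathfrak{m}}=0$ using irreducibility of $\mathfrak{m}$, then compute the bracket and read off the condition $Ax=i\alpha x$. The only cosmetic differences are that the paper opens with a separate (unnecessary) case $A=0$ handled via Proposition~\ref{CGN:theorem}, and it computes $[X,X_{\mathfrak{m}}]$ directly rather than first reducing to $[X_{\mathfrak{k}},X_{\mathfrak{m}}]$ and invoking $[\mathfrak{k},\mathfrak{m}]\subseteq\mathfrak{m}$; your reductivity argument makes the vanishing of the projection slightly more transparent.
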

\begin{proof} Consider $X\in\mathfrak{su}(n)$ written as in \eqref{general:vector}. If $A=0$ then the result follows from Proposition \ref{CGN:theorem}. Otherwise, by virtue of Proposition \ref{criterion}, $X$ is an equigeodesic vector generating an equigeodesic through $[\textnormal{Id}]$ if and only if $[X,\Lambda X_{\mathfrak{m}}]_{\mathfrak{m}}=0$ for any metric operator $\Lambda.$ Since all metric operators are positive multiples of the identity map on $\mathfrak{m},$ we have that  $[X,\Lambda X_{\mathfrak{m}}]_{\mathfrak{m}}=0$ if and only if $\mu[X,X_{\mathfrak{m}}]_{\mathfrak{m}}=0,$ for all $\mu>0,$ that is, $[X,X_{\mathfrak{m}}]_{\mathfrak{m}}=0.$ Considering that $$X_{\mathfrak{m}}=\left(\begin{array}{cc}0&-x^*\\
x&0_{(n-1)\times (n-1)}\end{array}\right),$$ we compute $[X,X_{\mathfrak{m}}]_{\mathfrak{m}}:$
\begin{align*}
    [X,X_{\mathfrak{m}}]_{\mathfrak{m}}=&\left(\begin{array}{cc}i\alpha&-x^*\\
x&A_{(n-1)\times (n-1)}\end{array}\right)\left(\begin{array}{cc}0&-x^*\\
x&0_{(n-1)\times (n-1)}\end{array}\right)\\
\\
&-\left(\begin{array}{cc}0&-x^*\\
x&0_{(n-1)\times (n-1)}\end{array}\right)\left(\begin{array}{cc}i\alpha&-x^*\\
x&A_{(n-1)\times (n-1)}\end{array}\right)\\
\\
=& \left(\begin{array}{cc}0& x^*A-i\alpha x^*\\ Ax-i\alpha x&0\end{array}\right)\\
\\
=& \left(\begin{array}{cc}0& -(Ax-i\alpha x)^*\\ Ax-i\alpha x&0\end{array}\right).
\end{align*}
Thus, $[X,X_{\mathfrak{m}}]_{\mathfrak{m}}=0$ if and only if $Ax-i\alpha x=0.$ This completes the proof. 

\end{proof}

\begin{corollary}\label{corollary:quigeodesics} The equigeodesic curve through $[\textnormal{Id}]$ generated by an equigeodesic vector $X\in\mathfrak{su}(n)$ is the same as the equigeodesic curve thorugh $[\textnormal{Id}]$ generated by its projection $X_{\mathfrak{m}}$ onto $\mathfrak{m}.$
\end{corollary}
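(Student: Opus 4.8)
The plan is to reduce the statement to the uniqueness of geodesics. Fix once and for all a $G$-invariant metric $g$ on $\mathbb{F}(n;1,n-1)$, with $G=\textnormal{SU}(n)$ and $K=\textnormal{S}(\textnormal{U}(1)\times\textnormal{U}(n-1))$, and write the equigeodesic vector $X$ as $X=X_{\mathfrak{k}}+X_{\mathfrak{m}}$ along the reductive splitting, so that $X_{\mathfrak{k}}$ is the block-diagonal part (with blocks $i\alpha$ and $A$, in the notation of \eqref{general:vector}) and $X_{\mathfrak{m}}$ is the off-diagonal part. The first observation is that $X_{\mathfrak{m}}$ is itself an equigeodesic vector generating an equigeodesic through $[\textnormal{Id}]$: applying Proposition~\ref{general:equigeodesics} to $X_{\mathfrak{m}}$, for which $A=0$ and $\alpha=0$, the condition $Ax=i\alpha x$ holds trivially. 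Hence both homogeneous curves $\gamma_X(t):=\exp(tX)[\textnormal{Id}]$ and $\gamma_{X_{\mathfrak{m}}}(t):=\exp(tX_{\mathfrak{m}})[\textnormal{Id}]$ are $g$-geodesics.

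Next I would check that these two geodesics share the same initial data, after which uniqueness finishes the argument. Both pass through $[\textnormal{Id}]$ at $t=0$. Denoting by $\pi:\textnormal{SU}(n)\to\mathbb{F}(n;1,n-1)$ the canonical projection, the velocity at $t=0$ of a homogeneous curve $t\mapsto\exp(tY)[\textnormal{Id}]$ equals $d\pi_{\textnormal{Id}}(Y)$; since $d\pi_{\textnormal{Id}}$ vanishes on $\mathfrak{k}=\mathfrak{s}(\mathfrak{u}(1)\oplus\mathfrak{u}(n-1))$ and restricts to an isomorphism on $\mathfrak{m}$, and since $X-X_{\mathfrak{m}}=X_{\mathfrak{k}}\in\mathfrak{k}$, we obtain $\gamma_X'(0)=d\pi_{\textnormal{Id}}(X)=d\pi_{\textnormal{Id}}(X_{\mathfrak{m}})=\gamma_{X_{\mathfrak{m}}}'(0)$. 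By uniqueness of the geodesic with prescribed initial point and initial velocity, $\gamma_X\equiv\gamma_{X_{\mathfrak{m}}}$, which is exactly the assertion of the corollary.

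An alternative route, avoiding any appeal to uniqueness of geodesics, is to show directly that $[X_{\mathfrak{k}},X_{\mathfrak{m}}]=0$. The bracket $[X,X_{\mathfrak{m}}]$ computed in the proof of Proposition~\ref{general:equigeodesics} already lies entirely in $\mathfrak{m}$, its only nonzero block being $Ax-i\alpha x$; as $[X,X_{\mathfrak{m}}]=[X_{\mathfrak{k}},X_{\mathfrak{m}}]$ and the equigeodesic hypothesis forces $Ax=i\alpha x$, this bracket vanishes. Then $\exp(tX)=\exp(tX_{\mathfrak{k}})\exp(tX_{\mathfrak{m}})$ for all $t$, with $\exp(tX_{\mathfrak{k}})\in K$, so projecting to the quotient gives $\exp(tX)[\textnormal{Id}]=\exp(tX_{\mathfrak{m}})[\textnormal{Id}]$. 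I do not anticipate a genuine obstacle in either approach: the only delicate point is the identification of the initial velocity of a homogeneous curve through the origin with the $\mathfrak{m}$-component of its generating vector (equivalently, that $d\pi_{\textnormal{Id}}$ kills $\mathfrak{k}$), and in the second approach the sole computational input — the vanishing of $[X_{\mathfrak{k}},X_{\mathfrak{m}}]$ — is essentially already contained in the preceding proof.
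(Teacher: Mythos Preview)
Your alternative route is exactly the paper's proof: the authors invoke Proposition~\ref{general:equigeodesics} to obtain $Ax=i\alpha x$, observe that this makes $X_{\mathfrak{k}}$ and $X_{\mathfrak{m}}$ commute, and conclude $[\exp(tX)]=[\exp(tX_{\mathfrak{m}})\exp(tX_{\mathfrak{k}})]=[\exp(tX_{\mathfrak{m}})]$. (A cosmetic point: you wrote the factors as $\exp(tX_{\mathfrak{k}})\exp(tX_{\mathfrak{m}})$; to drop the $K$-element directly in the right-coset quotient one wants it on the right, but commutativity makes this immaterial.)

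Your primary approach, via uniqueness of geodesics, is correct and genuinely different from the paper's. It replaces the explicit algebraic identity by a softer Riemannian argument: once both homogeneous curves are geodesics for some fixed invariant metric and share initial point and velocity, they must coincide for all $t$. This costs essentially nothing beyond the trivial check that $X_{\mathfrak{m}}$ satisfies the equigeodesic condition of Proposition~\ref{general:equigeodesics} and the standard fact that $\ker d\pi_{\textnormal{Id}}=\mathfrak{k}$. The paper's argument, by contrast, is purely group-theoretic --- it never fixes a metric --- and exhibits the pointwise equality of the two curves directly; it also makes transparent \emph{why} the $\mathfrak{k}$-component is irrelevant, namely because it commutes with $X_{\mathfrak{m}}$ and exponentiates into the isotropy subgroup.
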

\begin{proof} Let $$X=\left(\begin{array}{cc}i\alpha&-x^*\\
x&A_{(n-1)\times (n-1)}\end{array}\right),$$ be an equigeodesic vector generating an equigeodesic curve  through $[\textnormal{Id}].$ Then, by Proposition \ref{general:equigeodesics}, we have that $Ax=i\alpha x$. This implies that $$X_{\mathfrak{s}(\mathfrak{u}(1)\oplus\mathfrak{u}(n-1))}=\left(\begin{array}{cc}i\alpha&0_{1\times(n-1)}\\
0_{(n-1)\times 1}&A\end{array}\right)$$ commutes with $$X_{\mathfrak{m}}=\left(\begin{array}{cc}0&-x^*\\
x&0_{(n-1)\times (n-1)}\end{array}\right).$$
Therefore
\begin{align*}
    [\exp(tX)]=&[\exp(tX_{\mathfrak{m}}+tX_{\mathfrak{s}(\mathfrak{u}(1)\oplus\mathfrak{u}(n-1))})]\\
    =&[\exp(tX_{\mathfrak{m}})\exp(tX_{\mathfrak{s}(\mathfrak{u}(1)\oplus\mathfrak{u}(n-1))})]\\
    =&[\exp(tX_{\mathfrak{m}})],
\end{align*}
 where the last equality holds because $\exp(tX_{\mathfrak{s}(\mathfrak{u}(1)\oplus\mathfrak{u}(n-1))})\in\textnormal{S}(\textnormal{U}(1)\times\textnormal{U}(n-1)),$ for all $t\in\mathbb{R}.$ The proof is complete.  
\end{proof}

\begin{remark} The adjoint representation of $\textnormal{SU}(n)$ is given by $\Ad(U)X=UXU^*,$ so an equigeodesic vector generating an equigeodesic through $[U]$ is nothing but the conjugation by $U$ of a matrix of the form \eqref{general:vector} with $Ax=i\alpha x$ (see Proposition \ref{prop:equigeodesic}).
\end{remark}
\section{Characterization of optimal-speed unitary time evolution}

In this section, we introduce the construction of the manifold of pure states as the orbit of the 
$\textnormal{SU}(n)$ Lie group acting on it. This action establishes an isometric relationship between this manifold and the flag manifold $\mathbb{F}(n;1,n-1)$. This connection allows for a characterization of the Hamiltonians that generate optimal-speed unitary time evolution.

\subsection{Quantum states and flag manifolds}

We denote by $D_n$ the set of quantum states over the Hilbert space $\mathcal{H}_n$. Let $D_{nk}$ denote the subset of $D_n$ consisting of all quantum states of rank $k\leq n$. We have that each $D_{nk}$
has the structure of a Riemannian manifold, and $D_n$
is the union of these manifolds $D_{nk}$, although it is merely a convex set \cite{Dit}.\\

Let 

\begin{align*}
    \theta: \SU(n) \times D_{nk} &\rightarrow D_{nk}\\
    (U,\rho) & \mapsto U\rho U^* 
\end{align*}

denote the conjugation action of $\SU(n)$ on $D_{nk}$. Note that if $\rho\in D_{nk}$, we can write $\rho$ as follows: $\rho=\sum_{i=1}^k p_i|\phi_i\rangle\langle\phi_i|$ where $\{|\phi_i\rangle\}_{i=1}^k$ is an orthonormal subset of $\mathcal{H}_n$. Then, $\rho=UDU^*$ with $U\in \SU(n)$ and $D=\diag(p_1,\ldots,p_k,0,\ldots,0)$, therefore $\rho\in \theta_D$ the orbit of the action $\theta$ over $D$.

\begin{proposition}
    Let $\rho=\sum_{i=1}^k p_i|\phi_i\rangle\langle\phi_i|$ be a quantum state over $\mathcal{H}_n$. Then $\rho$ belongs to a Riemannian manifold isometric to the flag manifold $$\mathbb{F}(n;n_1,...,n_r,n-k),$$ where $\sum_{j=1}^r n_j=k$. 
\end{proposition}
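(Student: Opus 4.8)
The plan is to exhibit the given state $\rho$ as a point on an $\SU(n)$-orbit, compute the isotropy subgroup of that orbit, and recognize the orbit — equipped with the Riemannian metric it inherits from $D_{nk}$ — as a flag manifold carrying an invariant metric.

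First I would use the spectral decomposition. Completing $\{|\phi_i\rangle\}_{i=1}^k$ to an orthonormal basis of $\mathcal{H}_n$ yields $\rho=UDU^*$ with $D=\diag(p_1,\dots,p_k,0,\dots,0)$ and $U\in\SU(n)$; one may always arrange $\det U=1$ by multiplying $U$ by a suitable scalar phase, which does not affect $UDU^*$. Let $q_1,\dots,q_r$ be the distinct positive values among $p_1,\dots,p_k$, with multiplicities $n_1,\dots,n_r$, so that $\sum_{j=1}^r n_j=k$ and the eigenvalue $0$ occurs with multiplicity $n-k$. After conjugating by a determinant-one permutation matrix (corrected by a phase if necessary) we may assume that $D$ is block-diagonal with blocks $q_1 I_{n_1},\dots,q_r I_{n_r}$ and a zero block of size $n-k$. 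Thus $\rho$ lies in the orbit $\theta_D=\{VDV^*:V\in\SU(n)\}\subseteq D_{nk}$, on which $\SU(n)$ acts transitively by construction.

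Next I would identify the isotropy subgroup $K:=\{V\in\SU(n):VDV^*=D\}$. The condition $VDV^*=D$ is equivalent to $[V,D]=0$; since the eigenspaces of $D$ are the coordinate blocks of sizes $n_1,\dots,n_r,n-k$ with pairwise distinct eigenvalues $q_1,\dots,q_r,0$, commuting with $D$ forces $V$ to be block-diagonal with unitary blocks of those sizes, and imposing $\det V=1$ gives exactly $K=\textnormal{S}(\textnormal{U}(n_1)\times\cdots\times\textnormal{U}(n_r)\times\textnormal{U}(n-k))$. The orbit map $VK\mapsto VDV^*$ is then a bijection onto $\theta_D$, and since $\theta_D$ is the orbit of a smooth action of a compact Lie group it is an embedded submanifold of $D_{nk}$; hence this map is a diffeomorphism from $\SU(n)/K$ onto $\theta_D$, so $\theta_D\cong\mathbb{F}(n;n_1,\dots,n_r,n-k)$ as smooth manifolds.

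It remains to upgrade this to an isometry. Conjugation by any $V\in\SU(n)$ preserves the Hilbert–Schmidt inner product $\langle X,Y\rangle=\textnormal{Tr}(XY)$ on Hermitian operators, so $\SU(n)$ acts by isometries on the ambient affine space of trace-one Hermitian operators, hence on $D_{nk}$ with its induced Riemannian structure, and in particular it maps the submanifold $\theta_D$ isometrically onto itself. Transporting the induced metric on $\theta_D$ through the diffeomorphism above produces an $\SU(n)$-invariant metric $g$ on $\SU(n)/K$, so $\theta_D$ is isometric to $\bigl(\mathbb{F}(n;n_1,\dots,n_r,n-k),g\bigr)$, which proves the claim. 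I expect the only point needing genuine care to be the isotropy computation — specifically, checking that passing from $\textnormal{U}(n)$ to $\textnormal{SU}(n)$ neither shrinks the orbit (scalar phases act trivially on $\rho$) nor alters the stabilizer beyond replacing the block-unitary group $\textnormal{U}(n_1)\times\cdots\times\textnormal{U}(n_r)\times\textnormal{U}(n-k)$ by its determinant-one subgroup, which is precisely the group appearing in the definition of the flag manifold; the Riemannian part is then formal once an invariant ambient metric is fixed.
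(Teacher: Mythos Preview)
Your proposal is correct and follows essentially the same route as the paper: show that $\rho$ lies on the conjugation orbit $\theta_D$ of the diagonal model $D$, identify the isotropy at $D$ as $\textnormal{S}(\textnormal{U}(n_1)\times\cdots\times\textnormal{U}(n_r)\times\textnormal{U}(n-k))$, and then pull back the induced Riemannian metric on the orbit through the orbit diffeomorphism $\SU(n)/K\to\theta_D$. The paper does exactly this, only more tersely (it simply asserts the isotropy computation and pulls back an arbitrary metric on $\theta_D$); your version supplies the details of the stabilizer calculation and additionally observes, via the Hilbert--Schmidt inner product, that the resulting metric on the flag manifold is $\SU(n)$-invariant, which is not strictly required by the statement but is a welcome refinement.
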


\begin{proof}
    We have that 
    
    $$\rho\in \theta_{D}=\left\{\theta(U,D):U\in SU(n)\right\},$$
    
    where $D=\diag(p_1,\ldots,p_k,0,\ldots,0)$. Because $D_{nk}$ is a Riemannian manifold, then the orbit $\theta_{D}$ is a Riemannian manifold diffeomorphic to $\SU(n)/I_D$, where $I_D:=\{U\in \SU(n): \theta(U,D)=D\}$ is the isotropy group of $D$. Therefore, there exists a diffeomorphism $f$ from $\SU(n)/I_{D}$ to $\theta_D.$ If we take a Riemannian metric $g$ on $\theta_D$, we can equip $\SU(n)/I_D$ with a Riemannian metric such that $f$ becomes an isometry.\\

    Finally, it is not difficult to see that $I_D =\textnormal{S}(\textnormal{U}(n_1)\times\cdots\times\textnormal{U}(n_r)\times\textnormal{U}(n-k))$ where each $n_j$ is the multiplicity of each eigenvalue of $\rho$.

\end{proof}
    
\begin{example}\label{example:pure and quasi-pure estates}

As we mentioned before, we will consider a basis for the Hilbert space $\mathcal{H}_n$ such that $|0\rangle$ is represented by the complex vector $(1,\vec{0})^T$.\\

    \begin{enumerate}
        \item Let $D=|0\rangle\langle 0|=\diag(1,0,0,\ldots,0)$. The orbit of the action $\theta$ in $D$ is the set of pure states, that is, $D_{n1}$ is isometric to $\mathbb{F}(n;1,n-1)$.\\

        \item If $D=\diag(p_1,p_2,p_2,\ldots,p_2)$, then the orbit $\theta_D$ is also isometric to the flag manifold $\mathbb{F}(n;1,n-1)$.\\
        
    \end{enumerate}
    
Note that different orbits can be isometric to the same flag manifold. We discuss this relationship at the end of this section.

\end{example}

\subsection{Main results} 

Now that we have prepared the way, we are two lemmas away from presenting a comprehensive characterization of optimal-speed time evolution for pure states in any dimension. Furthermore, we will see that the evolution of a one-dimensional subspace takes the orthogonal subspace along with it, therefore we extend these findings to a specific class of quantum states known as quasi-pure states (Definition \ref{def:quasi-pure}).

\begin{lemma}{\label{lemma:equig}}

    Let $f:\mathbb{F}(n;1,n-1)\rightarrow D_{1n}$ be the diffeomorphism defined by $f([U])=U|0\rangle\langle 0| U^*$. Then the image of equigeodesic curves through the origin in $\mathbb{F}(n;1,n-1)$ by $f$ determines dynamic trajectories of optimal-speed unitary time evolution.\\

    In other words, the Hamiltonian $H=iX$, with $X\in\mathfrak{m}$ an equigeodesic vector for $$\mathbb{F}(n;1,n-1),$$ generates optimal-speed unitary time evolution from the initial vector state $|0\rangle$.
\end{lemma}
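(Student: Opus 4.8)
The plan is to connect the three ingredients assembled in the excerpt: Mostafazadeh's criterion \eqref{eq:OSTE}, the isometry $f:\mathbb{F}(n;1,n-1)\to D_{n1}$ identifying pure states with the flag manifold, and the equigeodesic characterization of Proposition \ref{general:equigeodesics}. First I would fix $X\in\mathfrak{m}$ an equigeodesic vector and set $H:=iX$, which is Hermitian since $X$ is skew-Hermitian, and traceless since $\mathfrak{m}\subset\mathfrak{su}(n)$. The unitary evolution $U(t)=\exp(-iHt/\hslash)=\exp(tX/\hslash)$ then acts on the initial pure state $|0\rangle\langle 0|$, and under the identification $f$ this trajectory is exactly the homogeneous curve $t\mapsto[\exp((t/\hslash)X)]$ in $\mathbb{F}(n;1,n-1)$. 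Because $X$ is an equigeodesic vector, this curve is a geodesic for \emph{every} $\SU(n)$-invariant metric; in particular it is a geodesic for the metric that $f$ pulls back from the Fubini–Study metric on $D_{n1}$ (here one must note that the Fubini–Study metric is $\SU(n)$-invariant, so it does correspond to a genuine invariant metric on the homogeneous space, and since $\mathfrak{m}$ is $\Ad^K$-irreducible every invariant metric is a scalar multiple of any fixed one — so there is essentially nothing to check). Hence the image curve $f([\exp((t/\hslash)X)])$ is a Fubini–Study geodesic in the space of pure states.

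Next I would invoke the criterion \eqref{eq:OSTE}: the evolution time to pass from $|\phi\rangle$ to $|\psi\rangle$ is $T=\hslash s/\Delta E_\phi(H)$, where $s$ is the Fubini–Study geodesic distance, and optimality (shortest $T$) is equivalent to $\Delta E_\phi(H)=\Delta E_{\max}(H)$. The key computational step is therefore to show that for $H=iX$ with $X\in\mathfrak{m}$ an equigeodesic vector, the energy uncertainty in the state $|0\rangle$ is maximal. The point is that an equigeodesic vector lying entirely in $\mathfrak{m}$ has the block form
\begin{equation*}
X=\begin{pmatrix}0&-x^*\\ x&0_{(n-1)\times(n-1)}\end{pmatrix},
\end{equation*}
so $H=iX$ has vanishing diagonal entry in the $|0\rangle$ slot, giving $\langle 0|H|0\rangle=0$, while $\langle 0|H^2|0\rangle=x^*x=\|x\|^2$; thus $\Delta E_0(H)=\|x\|$. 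One then checks, by the same kind of argument used in the quantum speed limit literature, that among all Hamiltonians whose evolution moves $|0\rangle$ to the \emph{same} target state $\exp(X/\hslash)|0\rangle$ at \emph{some} time, the one with $\langle 0|H|0\rangle=0$ (equivalently, with $H$ "purely off-diagonal" relative to the splitting $\mathbb{C}|0\rangle\oplus|0\rangle^\perp$) achieves the maximal value of $\Delta E$, so that the geodesic speed bound is saturated.

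I expect the main obstacle to be pinning down precisely what "optimal speed" is being compared against, and hence exactly which maximality statement about $\Delta E_0(H)$ needs proof. There are two natural readings: (a) the curve $f([\exp(tX/\hslash)])$ traces a Fubini–Study geodesic, hence realizes the minimal-length path between any two of its points, so the evolution is geodesic/optimal in the purely geometric sense; or (b) for fixed initial and final pure states, $H=iX$ minimizes $T$ over all admissible Hamiltonians, which requires the energy-uncertainty maximality above. Reading (a) follows essentially for free from the equigeodesic property plus the $\SU(n)$-invariance of Fubini–Study, and I would present that first as the clean geometric statement. For reading (b) I would argue that any Hamiltonian $H'$ steering $|0\rangle$ along the same geodesic can be modified by adding a multiple of $|0\rangle\langle0|$ (a phase, which the paper already discards) without changing the trajectory in projective space, and that this modification can only increase $\langle 0|H'|0\rangle^2$ away from zero, hence only decrease $\Delta E_0$; combined with Corollary \ref{corollary:quigeodesics} (which lets us assume $X\in\mathfrak{m}$ without loss of generality) this yields $\Delta E_0(iX)=\Delta E_{\max}$ and closes the argument via \eqref{eq:OSTE}.
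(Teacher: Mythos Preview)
Your geometric reading (a) is correct --- the curve $t\mapsto f([\exp(tX)])$ is indeed a Fubini--Study geodesic, since that metric is $\SU(n)$-invariant and $X$ is equigeodesic --- but this alone does not establish optimal speed. The criterion invoked from Mostafazadeh is that $\Delta E_{|0\rangle}(H)=\Delta E_{\max}(H)$, where $\Delta E_{\max}(H)=\max_{|\phi\rangle}\Delta E_{|\phi\rangle}(H)$ is a maximum over \emph{states} for the \emph{fixed} Hamiltonian $H=iX$. You correctly compute $\Delta E_{|0\rangle}(H)=\|x\|$, but your reading (b) then switches to maximizing $\Delta E_{|0\rangle}$ over a family of \emph{Hamiltonians} $H'$ reaching the same target. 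That is a different optimization, and the specific move you propose --- perturbing $H'$ by a multiple of $|0\rangle\langle 0|$ and calling this ``a phase'' that leaves the projective trajectory unchanged --- is not valid: $|0\rangle\langle 0|$ does not commute with $H'$, so $\exp\bigl(-it(H'+c|0\rangle\langle 0|)\bigr)$ is not a scalar multiple of $\exp(-itH')$, and the projective orbit genuinely changes.

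The paper's proof is a direct computation that you are one step away from. For an arbitrary unit vector $|\phi\rangle=(a,\vec b)^T$ one has $\langle\phi|H^2|\phi\rangle=|a|^2\|x\|^2+|x^*b|^2$ and $\langle\phi|H|\phi\rangle=2\,\textnormal{Re}(\bar a\, x^*b)$, whence
\[
\Delta E_\phi(H)^2 \;=\; |a|^2\|x\|^2 + |x^*b|^2 - 4\bigl(\textnormal{Re}(\bar a\, x^*b)\bigr)^2 \;\le\; \|x\|^2,
\]
the last inequality by Cauchy--Schwarz: $|x^*b|^2\le \|x\|^2\|b\|^2=\|x\|^2(1-|a|^2)$. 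This yields $\Delta E_{|0\rangle}(H)=\Delta E_{\max}(H)$ directly and finishes the lemma via Mostafazadeh's criterion, without any appeal to the geodesic picture.
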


\begin{proof}

    Let us see that the Hamiltonian

$$H=iX=\begin{pmatrix}
 0 & x^*\\
 x & 0_{n-1\times n-1}
\end{pmatrix}$$

generates optimal-speed time evolution starting at the vector state $|0\rangle$. Observe that 

\begin{equation}
    \Delta E_0(H)^2=x^*x=\Arrowvert x\Arrowvert_2^2,
\end{equation}

and that for all $|\phi\rangle=(a, \vec{b})^T$, it follows from the Cauchy-Schwarz inequality that

\begin{equation}
    \Delta E_{\phi}(H)^2=|a|^2\Arrowvert x \Arrowvert_2^2+|x^*b|^2-4Re(\Bar{a}x^*b)^2\leq \Arrowvert x \Arrowvert_2^2-4Re(\Bar{a}x^*b)^2.
\end{equation}

Therefore, $\Delta E_0(H)$ is maximum.

\end{proof}

\begin{lemma}{\label{lemma:propH}}
    Let
$$H=\begin{pmatrix}
 \alpha & x^*\\
 x & A_{n-1\times n-1}
\end{pmatrix}$$

be a Hamiltonian with $A\neq 0$ and $Ax\neq \alpha x$. Then, $\Delta E_{max}(H)>\Arrowvert x\Arrowvert_2$.
\end{lemma}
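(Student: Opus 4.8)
The plan is to exhibit a specific vector state $|\phi\rangle$ at which the energy uncertainty strictly exceeds $\Arrowvert x\Arrowvert_2$; since $\Delta E_{\max}(H)$ is the supremum of $\Delta E_\phi(H)$ over all unit vectors, producing one such $|\phi\rangle$ suffices. First I would record the general formula: for a unit vector $|\phi\rangle=(a,\vec b\,)^T\in\mathcal H_n$ with $|a|^2+\Arrowvert b\Arrowvert_2^2=1$, writing $H$ in the block form given, one has $\langle\phi|H|\phi\rangle = \alpha|a|^2 + 2\operatorname{Re}(\bar a\,x^*b) + \langle b, Ab\rangle$ (the last term is purely imaginary-free since $A$ skew-Hermitian would make $\langle b,Ab\rangle$ imaginary — here $H$ Hermitian means $A^*=A$, so it is real) and $\langle\phi|H^2|\phi\rangle$ expands into $|a|^2(\alpha^2+\Arrowvert x\Arrowvert_2^2) + \Arrowvert xa + Ab\Arrowvert_2^2 + \ldots$; the point is that $\Delta E_\phi(H)^2$ is a smooth function on the sphere whose value at $|0\rangle$ equals $\Arrowvert x\Arrowvert_2^2$.

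The key step is a first-order (variational) argument: I would compute the derivative of $\Delta E_\phi(H)^2$ as $|\phi\rangle$ moves away from $|0\rangle$ in a chosen tangent direction and show it can be made positive under the hypotheses $A\neq 0$, $Ax\neq\alpha x$. Concretely, take the curve $|\phi_t\rangle = (\cos t)\,|0\rangle + (\sin t)\,(0,\vec v\,)^T$ for a fixed unit vector $\vec v\in\mathbb C^{n-1}$, expand $\Delta E_{\phi_t}(H)^2$ to order $t^2$ around $t=0$ (the order-$t$ term will involve $\operatorname{Re}\langle v, Ax - \alpha x\rangle$ type quantities, reflecting exactly the obstruction $Ax=\alpha x$), and argue that for a suitable choice of $\vec v$ one gets $\Delta E_{\phi_t}(H)^2 > \Arrowvert x\Arrowvert_2^2$ for small $t>0$. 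The two cases — whether the linear term in $t$ is nonzero, or vanishes forcing us to use the quadratic term, which is where $A\neq 0$ enters — should be handled separately; when $Ax\neq\alpha x$ one can pick $\vec v$ parallel to $Ax-\alpha x$ (after projecting off any component that kills the gain) to make the relevant inner product nonzero, and then flipping the sign of $t$ if necessary guarantees an increase.

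The main obstacle I anticipate is bookkeeping: carefully isolating which terms in the $O(t)$ and $O(t^2)$ expansions of $\langle\phi_t|H^2|\phi_t\rangle - \langle\phi_t|H|\phi_t\rangle^2$ survive, and checking that the hypotheses $A\neq 0$ and $Ax\neq\alpha x$ are exactly what is needed to guarantee a strictly positive variation — i.e.\ that neither hypothesis is superfluous and that degenerate cancellations cannot occur for every choice of direction $\vec v$. A clean way to organize this is to note that $\Delta E_\phi(H)^2 = \Arrowvert H|\phi\rangle\Arrowvert_2^2 - \langle\phi|H|\phi\rangle^2$ equals $\Arrowvert (H - \langle\phi|H|\phi\rangle)\,|\phi\rangle\Arrowvert_2^2$, the squared norm of the component of $H|\phi\rangle$ orthogonal to $|\phi\rangle$; maximizing this is a Rayleigh-type problem, and the critical-point condition is that $|\phi\rangle$ be an eigenvector of $H$. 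At $|0\rangle$ the vector $H|0\rangle = (\alpha, \vec x\,)^T$ is \emph{not} proportional to $|0\rangle$ unless $x=0$, so $|0\rangle$ is generically not even a critical point; the content of the lemma is that the only way $\Arrowvert x\Arrowvert_2$ can fail to be beaten is the degenerate configuration $A=0$ or $Ax=\alpha x$ (matching Lemma \ref{lemma:equig} and Proposition \ref{general:equigeodesics}), and the variational computation above makes this precise.
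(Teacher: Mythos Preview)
Your variational approach is correct and genuinely different from the paper's proof. In fact it is cleaner than you fear: a direct computation along the curve $|\phi_t\rangle=(\cos t,\,(\sin t)\vec v\,)^T$ gives
\[
\left.\frac{d}{dt}\Delta E_{\phi_t}(H)^2\right|_{t=0}=2\operatorname{Re}\bigl(v^*(Ax-\alpha x)\bigr),
\]
so taking $\vec v=(Ax-\alpha x)/\Arrowvert Ax-\alpha x\Arrowvert_2$ makes the first-order term equal to $2\Arrowvert Ax-\alpha x\Arrowvert_2>0$. The hypothesis $Ax\neq\alpha x$ alone already forces the linear term to be nonzero for this choice of $\vec v$, so no second-order analysis is needed and the separate hypothesis $A\neq 0$ plays no role in your argument. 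Your worry about ``degenerate cancellations'' is unfounded here: the linear term vanishes for \emph{every} $\vec v$ only when $Ax=\alpha x$, which is excluded.

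The paper takes a different, more explicit route, splitting into two cases. When $x$ is not an eigenvector of $A$, it evaluates the uncertainty at the concrete test state $|\phi\rangle=(0,x)^T/\Arrowvert x\Arrowvert_2$ and obtains $\Delta E_\phi(H)^2=\Arrowvert x\Arrowvert_2^2+\Delta E_{\tilde\phi}(A)^2>\Arrowvert x\Arrowvert_2^2$. When $Ax=\zeta x$ with $\zeta\neq\alpha$, it computes two eigenvalues of $H$ via a Schur-complement determinant, namely $\lambda_\pm=\tfrac12\bigl(\alpha+\zeta\pm\sqrt{(\alpha-\zeta)^2+4\Arrowvert x\Arrowvert_2^2}\,\bigr)$, and then invokes the Popoviciu-type identity $\Delta E_{\max}(H)=\tfrac12\bigl(\max\operatorname{spec}(H)-\min\operatorname{spec}(H)\bigr)$ to conclude $\Delta E_{\max}(H)\geq(\lambda_+-\lambda_-)/2>\Arrowvert x\Arrowvert_2$. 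Your argument is more uniform and conceptual---it shows directly that $|0\rangle$ is not even a critical point of the uncertainty functional---while the paper's yields explicit quantitative lower bounds in each case.
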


\begin{proof}
    Suppose that $x$ is not an eigenvector of $A$. Let $|\phi\rangle=(0, x)^T/\Arrowvert x\Arrowvert_2$, then

    \begin{equation}
        \Delta E_{\phi}(H)^2=\Arrowvert x \Arrowvert_2^2+\left(\Delta E_{\Tilde{\phi}}(A)\right)^2>\Arrowvert x\Arrowvert_2^2,
    \end{equation}

    with $|\Tilde{\phi}\rangle=x/\Arrowvert x\Arrowvert_2$. Now, let us assume that $Ax=\zeta x$ with $\zeta\neq \alpha$. As 

    \begin{equation}
        \det\begin{pmatrix}
    \alpha-\lambda & x^*\\
    x & A-\lambda I
\end{pmatrix}=\det(A-\lambda I)\det(\alpha-\lambda-x^*(A-\lambda I)^{-1}x)
    \end{equation}

    we have that the the roots of $p(\lambda)=\alpha-\lambda-\Arrowvert x\Arrowvert_2^2/(\zeta-\lambda)$ are eigenvalues of $H$, this is 

    \begin{equation}
        \lambda_{+,-}=\frac{1}{2}\left(\pm\sqrt{(\alpha-\zeta)^2+4\Arrowvert x\Arrowvert_2^2}+\alpha+\zeta\right),
    \end{equation}

    are eigenvalues of $H$. By the Popoviciu inequality $\Delta E_{max} (H)\leq (\max(spec(H))-\min(spec(H)))/2$ \cite{BD}, and considering the energy uncertainty of the Hamiltonian in the uniform mixture of the eigenvectors of $H$ associated with the extreme eigenvalues of the spectrum, we have: $\Delta E_{max} (H)= (\max(spec(H))-\min(spec(H)))/2$; therefore 

    \begin{equation}
        \Delta E_{max} (H)\geq \frac{\lambda_+-\lambda_-}{2}>\Arrowvert x\Arrowvert_2.
    \end{equation}

\end{proof}

\begin{theorem}{\label{teo:characterization}}
    Let $f:\mathbb{F}(n;1,n-1)\rightarrow D_{1n}$ be the diffeomorphism defined by $f([U])=U|0\rangle\langle 0| U^*$. Then every dynamic trajectory generated by an optimal-speed unitary time evolution, with initial vector state $|0\rangle$, is the image of an equigeodesic curve  through the origin in $\mathbb{F}(n;1,n-1)$ by $f$.\\

    In other words, every Hamiltonian that generates optimal-speed unitary time evolution with initial vector state $|0\rangle$ is given by an equigeodesic vector for $\mathbb{F}(n;1,n-1)$ multiplied by $i$.
\end{theorem}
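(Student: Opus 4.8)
The plan is to reduce the statement to Lemma \ref{lemma:propH}, after recording one elementary computation of $\Delta E_{0}(H)$. First I would fix an arbitrary traceless Hermitian Hamiltonian and write it in block form with respect to the splitting $\mathcal{H}_n=\mathbb{C}|0\rangle\oplus|0\rangle^{\perp}$:
\[
H=\begin{pmatrix}\alpha & x^{*}\\ x & A\end{pmatrix},\qquad \alpha\in\mathbb{R},\ A^{*}=A,\ \alpha+\textnormal{Tr}(A)=0 .
\]
A block multiplication gives $\langle 0|H|0\rangle=\alpha$ and $\langle 0|H^{2}|0\rangle=\alpha^{2}+x^{*}x$, so that $\Delta E_{0}(H)^{2}=\|x\|_{2}^{2}$, i.e. $\Delta E_{0}(H)=\|x\|_{2}$. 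Since by definition ``$H$ generates optimal-speed unitary time evolution with initial vector state $|0\rangle$'' means precisely $\Delta E_{0}(H)=\Delta E_{\max}(H)$, the whole theorem amounts to showing that this equality forces $Ax=\alpha x$.

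Next I would invoke Lemma \ref{lemma:propH} in contrapositive form: if $\Delta E_{0}(H)=\Delta E_{\max}(H)=\|x\|_{2}$, then we cannot simultaneously have $A\neq 0$ and $Ax\neq\alpha x$. Since $A=0$ forces $\alpha=0$ by the trace constraint (and $x=0$ makes the identity trivial as well), in every remaining case $Ax=\alpha x$. Hence the optimality hypothesis indeed yields $Ax=\alpha x$, with the degenerate cases handled honestly along the way (e.g.\ $x=0$ together with optimality in fact forces $H$ to be scalar, hence $H=0$, the trivial equigeodesic vector).

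Finally I would translate this into the language of equigeodesic vectors. Setting $X:=-iH\in\mathfrak{su}(n)$ and writing $X$ in the form \eqref{general:vector}, one reads off the real number $-\alpha$ in the $(1,1)$ slot, the vector $-ix$ in the first column, and the skew-Hermitian block $-iA$; a one-line check shows that $Ax=\alpha x$ is equivalent to $(-iA)(-ix)=i(-\alpha)(-ix)$, which by Proposition \ref{general:equigeodesics} says exactly that $X$ is an equigeodesic vector generating an equigeodesic through $[\textnormal{Id}]$ in $\mathbb{F}(n;1,n-1)$. Thus $H=iX$ with $X$ equigeodesic, the claimed algebraic conclusion. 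For the geometric reformulation I would note that the trajectory is $\rho(t)=\exp(tX)|0\rangle\langle 0|\exp(tX)^{*}=f([\exp(tX)])$, and by Corollary \ref{corollary:quigeodesics} the curve $[\exp(tX)]$ coincides with the equigeodesic generated by the projection $X_{\mathfrak{m}}$, so the dynamic trajectory is the $f$-image of an equigeodesic curve through the origin.

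I do not expect a genuine obstacle, because the analytic heart of the argument has already been isolated in Lemma \ref{lemma:propH}; the only points requiring care are bookkeeping ones: keeping track of the trace constraint so that the degenerate cases $x=0$ and $A=0$ are dispatched cleanly, and fixing the normalization in the identification $H\leftrightarrow X=-iH$ so that ``$Ax=\alpha x$'' and ``equigeodesic vector for $\mathbb{F}(n;1,n-1)$'' become literally the same condition (matching Proposition \ref{general:equigeodesics}).
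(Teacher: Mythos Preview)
Your proposal is correct and follows essentially the same route as the paper: compute $\Delta E_{0}(H)=\|x\|_{2}$, invoke Lemma~\ref{lemma:propH} in contrapositive form to force $A=0$ or $Ax=\alpha x$, and then identify $-iH$ as an equigeodesic vector via Proposition~\ref{general:equigeodesics} and Corollary~\ref{corollary:quigeodesics}. If anything, you are slightly more careful than the paper about the degenerate cases (using the trace constraint to absorb $A=0$ into $Ax=\alpha x$, and treating $x=0$ explicitly) and about citing Proposition~\ref{general:equigeodesics} directly for the algebraic characterization, whereas the paper closes by pointing to Lemma~\ref{lemma:equig}.
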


\begin{proof}
    Let us observe that for every Hamiltonian 
    
    $$H=\begin{pmatrix}
 \alpha & x^*\\
 x & A_{n-1\times n-1}
\end{pmatrix},$$

    we have to $\Delta E_0(H)=\Arrowvert x\Arrowvert_2$. Therefore, by Lemma \ref{lemma:propH}, $H$ generates optimal-speed unitary time evolution whenever $A=0$ or $Ax=\alpha x$. In the first case, 

    \begin{equation}
        H=\begin{pmatrix}
 0 & x^*\\
 x & 0_{n-1\times n-1}
\end{pmatrix}.
    \end{equation}

    In the second case, by Corollary \ref{corollary:quigeodesics}, the equigeodesic curve generated by the equigeodesic vector $-iH$ is the same as the equigeodesic curve generated by

     \begin{equation}
        X=\begin{pmatrix}
 0 & -ix^*\\
 -ix & 0_{n-1\times n-1}
\end{pmatrix}.
    \end{equation}

    Therefore, the results follows from Lemma \ref{lemma:equig}.

\end{proof}

So far, we have established a characterization for optimal-speed time evolution when the initial vector state is $|0\rangle$. In the following result, we generalize the previous findings and present an independent characterization of the initial vector state.

\begin{corollary}\label{cor:characterization}
    Every Hamiltonian that generates optimal-speed unitary time evolution with an arbitrary initial vector state $|\phi\rangle$ is given by an equigeodesic vector for $\mathbb{F}(n;1,n-1)$ multiplied by $i$.
\end{corollary}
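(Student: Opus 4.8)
The plan is to reduce the case of an arbitrary initial state $|\phi\rangle$ to the already-settled case of $|0\rangle$ by exploiting the $\SU(n)$-symmetry of the whole picture. First I would pick a unitary $V\in\SU(n)$ with $V|0\rangle=|\phi\rangle$ (which exists since $\SU(n)$ acts transitively on unit vectors up to phase, and phases are irrelevant for the projector $|\phi\rangle\langle\phi|$). Conjugation by $V$ intertwines the Schrödinger dynamics: if $H$ is a Hamiltonian, then $H$ generates optimal-speed evolution from $|\phi\rangle$ if and only if $\widetilde H:=V^{*}HV$ generates optimal-speed evolution from $|0\rangle$, because $\Delta E_{\phi}(H)=\Delta E_{0}(\widetilde H)$, $\Delta E_{\max}(H)=\Delta E_{\max}(\widetilde H)$ (the spectrum is unchanged under conjugation, and the Popoviciu-based bound from Lemma \ref{lemma:propH} and Lemma \ref{lemma:equig} depends only on the spectrum and on which vector realizes the uncertainty), and the Fubini--Study geodesic distance is $\SU(n)$-invariant so the criterion \eqref{eq:OSTE} transports verbatim.

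Next I would apply Theorem \ref{teo:characterization} to $\widetilde H$: it says $\widetilde H = i\widetilde X$ where $\widetilde X\in\mathfrak{su}(n)$ is an equigeodesic vector generating an equigeodesic through $[\mathrm{Id}]$ in $\mathbb{F}(n;1,n-1)$. Then $H = V\widetilde H V^{*} = i\,V\widetilde X V^{*} = i\,\Ad(V)\widetilde X$. By Proposition \ref{prop:equigeodesic}, $\Ad(V)\widetilde X$ is an equigeodesic vector generating an equigeodesic through $[V]$; and as noted in the Remark following Corollary \ref{corollary:quigeodesics}, such vectors are exactly the $\Ad(V)$-conjugates of matrices of the form \eqref{general:vector} with $Ax=i\alpha x$, i.e. equigeodesic vectors for $\mathbb{F}(n;1,n-1)$ (through the appropriate point). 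Hence $H$ is $i$ times an equigeodesic vector for $\mathbb{F}(n;1,n-1)$, which is precisely the claim. Conversely, reversing the argument shows every such $H$ does generate optimal-speed evolution from $|\phi\rangle$, giving the full equivalence.

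I would close by noting the geometric translation: under the diffeomorphism $f$, the state $|\phi\rangle\langle\phi|$ corresponds to $[V]$, and the dynamical trajectory $e^{-iHt/\hslash}|\phi\rangle\langle\phi|e^{iHt/\hslash}$ is the $f$-image of the equigeodesic $t\mapsto[\exp(-iHt/\hslash\cdot)\,]$... more precisely $t\mapsto\exp(t\,\Ad(V)\widetilde X)[V]$, an equigeodesic curve through $[V]$; Corollary \ref{corollary:quigeodesics} lets us replace $\widetilde X$ by its $\mathfrak{m}$-projection without changing the curve, so the trajectory is visibly an equigeodesic.

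The only genuinely delicate point is the claim that the optimal-speed condition $\Delta E_{\phi}(H)=\Delta E_{\max}(H)$ is preserved under $H\mapsto V^{*}HV$ together with $|\phi\rangle\mapsto|0\rangle$. This needs the observation that both quantities are conjugation-covariant in the right coordinated way: $\Delta E_{U^{*}\phi}(U^{*}HU) = \Delta E_{\phi}(H)$ for any $U\in\SU(n)$ directly from the definition $\Delta E_{\phi}(H)^2=\langle\phi|H^2|\phi\rangle-\langle\phi|H|\phi\rangle^2$, and $\Delta E_{\max}$ is a maximum over all states of this conjugation-covariant expression, hence invariant under $H\mapsto U^{*}HU$. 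Once this bookkeeping is in place the rest is a formal transport of Theorem \ref{teo:characterization}, and I do not expect any further obstacle; the genuinely new content all lives in the $|0\rangle$ case already handled by Lemmas \ref{lemma:equig}, \ref{lemma:propH} and Theorem \ref{teo:characterization}.
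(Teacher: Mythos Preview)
Your proposal is correct and follows essentially the same route as the paper: reduce to the $|0\rangle$ case by conjugating $H$ with a unitary $V$ sending $|0\rangle$ to $|\phi\rangle$, verify that $\Delta E_{\phi}(H)=\Delta E_{0}(V^{*}HV)$ and that $\Delta E_{\max}$ is conjugation-invariant, invoke Theorem~\ref{teo:characterization}, and then transport the resulting equigeodesic vector back via Proposition~\ref{prop:equigeodesic}. Your treatment of the $\Delta E_{\max}$ invariance is slightly more abstract (covariance of $\Delta E$ plus a bijection on states) where the paper writes out $\Delta E_{\psi}(H)=\Delta E_{\varphi}(U^{*}HU)$ explicitly, but this is a cosmetic difference, not a substantive one.
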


\begin{proof}

    Let $|\phi\rangle$ be a vector state. For the quantum state we have that $|\phi\rangle\langle\phi|=U|0\rangle\langle 0|U^*,$ with $U\in \SU(n)$. Then

    \begin{align*}
    \Delta E_\phi(H)^2&=Tr(H^2|\phi\rangle\langle\phi|)-Tr(H|\phi\rangle\langle\phi|)^2\\
    & = Tr(H^2U|0\rangle\langle0|U^*)-Tr(HU|0\rangle\langle 0|U^*)^2\\
    & = Tr((U^*HU)^2|0\rangle\langle0|)-Tr(U^*HU|0\rangle\langle 0|)^2.\\
    &=\Delta E_0(U^*HU)
\end{align*}

Furthermore, given $|\psi\rangle\neq |\phi\rangle$, let $V\in \SU(n)$ such that $V|0\rangle=|\psi\rangle$.   

\begin{align*}
    \Delta E_\psi(H)^2& = Tr(H^2V|0\rangle\langle0|V^*)-Tr(HV|0\rangle\langle 0|V^*)^2\\
    & = Tr((U^*HU)^2W|0\rangle\langle0|W^*)-Tr((U^*HU)W|0\rangle\langle 0|W^*)^2.\\
    &=\Delta E_\varphi(U^*HU),
\end{align*}

with $W=U^*V$ and $|\varphi\rangle=W|0\rangle$. Therefore, by Theorem \ref{teo:characterization} it follows that $U^*HU=iX$ with $X$ equigeodesic vector generating an equigeodesic curve through the origin, that is, $H=iUXU^*$. From Proposition \ref{prop:equigeodesic}, it follows that $H=i\Tilde{X},$ where $\Tilde{X}$ is an equigeodesic vector generating an equigeodesic curve through $U$. 

\end{proof}

Finally, we are going to define a special type of quantum states in $D_{nn}$ for which it becomes nearly evident that we can extend the result in Corollary \ref{cor:characterization}.

\begin{definition}\label{def:quasi-pure}
    Let $\{|\phi_1\rangle,\ldots,|\phi_n\rangle\} $ be an orthonormal basis of a Hilbert space $\mathcal{H}_n$. The quantum state  

    $$\rho=p_1|\phi_1\rangle\langle\phi_1|+p_2\sum_{i=2}^n|\phi_i\rangle\langle\phi_i|,$$

    with $p_1\neq p_2$, is called the quasi-pure state\footnote{A quantum state is usually called quasi-pure when $p_1>p_2$, but this distinction is not relevant for the discussion here.}.
\end{definition}

\begin{proposition}\label{prop:quasi-pure}
    Let $\rho=p_1|\phi_1\rangle\langle\phi_1|+p_2\sum_{i=2}^n|\phi_i\rangle\langle\phi_i|$, $\varrho=p_1|\psi_1\rangle\langle\psi_1|+p_2\sum_{i=2}^n|\psi_i\rangle\langle\psi_i|$ be quasi-pure states and $U$ an unitary matrix such that $U|\phi_1\rangle=|\psi_1\rangle$. Then

    \begin{equation}
        U\rho U^*=\varrho.
    \end{equation}
\end{proposition}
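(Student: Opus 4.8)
The plan is to exploit the spectral structure of a quasi-pure state: since $\rho$ has only two distinct eigenvalues $p_1$ (simple, with eigenvector $|\phi_1\rangle$) and $p_2$ (with multiplicity $n-1$, eigenspace $\{|\phi_1\rangle\}^{\perp}$), it is completely determined by the line spanned by $|\phi_1\rangle$. More precisely, I would first rewrite $\rho$ using the resolution of the identity $\sum_{i=1}^n |\phi_i\rangle\langle\phi_i| = \mathrm{Id}$, obtaining the basis-free expression
\begin{equation*}
\rho = p_1|\phi_1\rangle\langle\phi_1| + p_2\bigl(\mathrm{Id} - |\phi_1\rangle\langle\phi_1|\bigr) = p_2\,\mathrm{Id} + (p_1-p_2)|\phi_1\rangle\langle\phi_1|,
\end{equation*}
and similarly $\varrho = p_2\,\mathrm{Id} + (p_1-p_2)|\psi_1\rangle\langle\psi_1|$. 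This removes any dependence on the particular choice of the orthonormal bases $\{|\phi_i\rangle\}$ and $\{|\psi_i\rangle\}$ beyond their first vectors, which is the conceptual crux.

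Next I would conjugate by $U$ and use that $U$ is unitary, so $U\,\mathrm{Id}\,U^* = \mathrm{Id}$, together with the elementary identity $U|\phi_1\rangle\langle\phi_1|U^* = (U|\phi_1\rangle)(U|\phi_1\rangle)^* = |\psi_1\rangle\langle\psi_1|$, which follows directly from the hypothesis $U|\phi_1\rangle = |\psi_1\rangle$ (and hence $\langle\phi_1|U^* = \langle\psi_1|$). Substituting,
\begin{equation*}
U\rho U^* = p_2\,U\,\mathrm{Id}\,U^* + (p_1-p_2)\,U|\phi_1\rangle\langle\phi_1|U^* = p_2\,\mathrm{Id} + (p_1-p_2)|\psi_1\rangle\langle\psi_1| = \varrho,
\end{equation*}
which is the claim.

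There is really no serious obstacle here; the only point that requires a moment's care is the very first step, namely recognizing that $\sum_{i=2}^n |\phi_i\rangle\langle\phi_i| = \mathrm{Id} - |\phi_1\rangle\langle\phi_1|$ because $\{|\phi_i\rangle\}_{i=1}^n$ is an \emph{orthonormal basis} (not merely an orthonormal set), and likewise for $\{|\psi_i\rangle\}$. Once that observation is in place, everything reduces to linearity of conjugation and the unitarity of $U$. I would also remark that this proposition is exactly what makes the extension of Corollary \ref{cor:characterization} to quasi-pure states immediate: the $\SU(n)$-orbit of a quasi-pure state is parametrized by the same data (a one-dimensional subspace) as the orbit of a pure state, so the characterization of optimal-speed Hamiltonians via equigeodesic vectors for $\mathbb{F}(n;1,n-1)$ carries over verbatim.
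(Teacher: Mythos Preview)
Your proof is correct and follows essentially the same idea as the paper's: both hinge on the fact that $\sum_{i=2}^n |\phi_i\rangle\langle\phi_i|$ is the orthogonal projector onto $|\phi_1\rangle^\perp$, hence is completely determined by $|\phi_1\rangle$. You make this explicit upfront via the resolution of the identity $\sum_{i=2}^n|\phi_i\rangle\langle\phi_i|=\mathrm{Id}-|\phi_1\rangle\langle\phi_1|$ and then conjugate, whereas the paper conjugates first and then argues that $\{U|\phi_i\rangle\}_{i=2}^n$ and $\{|\psi_i\rangle\}_{i=2}^n$ span the same subspace and therefore yield the same projector; your packaging is slightly cleaner, but the content is the same.
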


\begin{proof}
    By hypothesis, $U|\phi_1\rangle=|\psi_1\rangle$. Therefore, 

    \begin{align*}
        U\rho U^*&=p_1U|\phi_1\rangle\langle\phi_1|U^*+p_2\sum_{i=2}^nU|\phi_i\rangle\langle\phi_i|U^*\\
        &=p_1|\psi_1\rangle\langle\psi_1|+p_2\sum_{i=2}^nU|\phi_i\rangle\langle\phi_i|U^*.
    \end{align*}

    Since $U$ is an unitary matrix, $\{U|\phi_i\rangle\}_{i=2}^n$ is an orthonormal set, orthogonal to $|\phi_1\rangle$. Then, $span(\{U|\phi_i\rangle\}_{i=2}^n)=span(\{|\psi_i\rangle\}_{i=2}^n)$, this means

    $$\sum_{i=2}^nU|\phi_i\rangle\langle\phi_i|U^*=\sum_{i=2}^n|\psi_i\rangle\langle\psi_i|.$$

    Hence, $U\rho U^*=\varrho.$ 
    
\end{proof}

The previous result implies that the evolution time between the quasi-pure states $\rho=p_1|\phi_1\rangle\langle\phi_1|+p_2\sum_{i=2}^n|\phi_i\rangle\langle\phi_i|$ and $\varrho=p_1|\psi_1\rangle\langle\psi_1|+p_2\sum_{i=2}^n|\psi_i\rangle\langle\psi_i|$  is the same evolution time between the states $|\phi_1\rangle$ and $|\psi_1\rangle$, that is, the optimal-speed unitary time evolution between quasi-pure states reduces to optimal-speed time evolution between pure states. The relationship between pure states and quasi-pure mixed states lends meaning to the connection between the geometric structures associated with these states, as discussed in Example \ref{example:pure and quasi-pure estates}. This connection allows us to further develop our characterization in the following result.

\begin{corollary}
    Every Hamiltonian that generates optimal-speed unitary time evolution with initial state a quasi-pure state, is an equigeodesic vector for $\mathbb{F}(n;1,n-1)$ multiplied by $i$.
\end{corollary}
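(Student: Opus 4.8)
The plan is to reduce the quasi-pure case directly to the pure-state characterization already established in Corollary \ref{cor:characterization}, using Proposition \ref{prop:quasi-pure} as the bridge. The key observation is that a quasi-pure state $\rho = p_1|\phi_1\rangle\langle\phi_1| + p_2\sum_{i=2}^n|\phi_i\rangle\langle\phi_i|$ is completely determined by the one-dimensional subspace spanned by $|\phi_1\rangle$ — equivalently, by the pure state $|\phi_1\rangle\langle\phi_1|$ — since $p_1,p_2$ are fixed and the orthogonal complement is forced. Thus the map sending $|\phi_1\rangle\langle\phi_1|$ to $\rho$ is a bijection between $D_{n1}$ and the orbit of $\rho$ under the conjugation action, and by Example \ref{example:pure and quasi-pure estates}(2) this orbit is isometric to $\mathbb{F}(n;1,n-1)$ via the same construction.

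First I would verify that unitary evolution is compatible with this correspondence: if $H$ is any Hamiltonian and $\rho(t) = e^{-iHt/\hslash}\rho\, e^{iHt/\hslash}$, then writing $U_t = e^{-iHt/\hslash}$ and noting $U_t|\phi_1\rangle$ is a unit vector, Proposition \ref{prop:quasi-pure} gives $\rho(t) = p_1 (U_t|\phi_1\rangle)(U_t\langle\phi_1|) + p_2\sum_{i=2}^n (U_t|\phi_i\rangle)(U_t\langle\phi_i|)$, so $\rho(t)$ is the quasi-pure state built from the evolved pure state $|\phi_1(t)\rangle = U_t|\phi_1\rangle$. Hence the dynamical trajectory of $\rho$ under $H$ is exactly the image, under the bijection above, of the dynamical trajectory of $|\phi_1\rangle$ under $H$.

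Second, I would argue that the evolution from $\rho$ to a target quasi-pure state $\varrho$ is optimal-speed if and only if the evolution from $|\phi_1\rangle$ to $|\psi_1\rangle$ is optimal-speed. The $(\Leftarrow)$ direction is immediate from the previous paragraph. For $(\Rightarrow)$, the point is that the geodesic distance between $\rho$ and $\varrho$ in the (flag-manifold) metric equals the Fubini–Study distance between $|\phi_1\rangle$ and $|\psi_1\rangle$ — this is the content of the isometry in Example \ref{example:pure and quasi-pure estates}(2), since both orbits carry the metric transported from $\mathbb{F}(n;1,n-1)$ and the identification is via the same $f([U]) = [U|0\rangle]$ construction; alternatively one invokes the remark following Proposition \ref{prop:quasi-pure} that the evolution times coincide. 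Combined with the quantum-speed-limit relation \eqref{eq:OSTE} (noting that the energy uncertainty $\Delta E(H)$ is a property of $H$ alone, not of which state-manifold we regard the initial state as living in), minimality of the evolution time for $\rho \to \varrho$ forces minimality for $|\phi_1\rangle \to |\psi_1\rangle$. Therefore $H$ generates optimal-speed evolution from the quasi-pure state $\rho$ precisely when it generates optimal-speed evolution from the pure state $|\phi_1\rangle$, and Corollary \ref{cor:characterization} then identifies $H = i\widetilde{X}$ with $\widetilde{X}$ an equigeodesic vector of $\mathbb{F}(n;1,n-1)$.

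The main obstacle I anticipate is making the claim ``the geodesic distance between the quasi-pure states equals the Fubini–Study distance between the generating pure states'' fully rigorous: it depends on a careful choice of which invariant metric one puts on the orbit $\theta_D$ and on checking that the diffeomorphism $\mathbb{F}(n;1,n-1) \to \theta_D$ can be taken to be the \emph{same} map (up to the isometry) that was used for pure states, so that ``optimal speed'' has an unambiguous meaning independent of $p_1,p_2$. Once one fixes the convention that the relevant metric on $\theta_D$ is the pullback of the Fubini–Study metric under this canonical identification — which is exactly what Example \ref{example:pure and quasi-pure estates}(2) asserts — the argument is routine, and the remark after Proposition \ref{prop:quasi-pure} already records the key consequence that the two evolution-time problems are literally the same problem.
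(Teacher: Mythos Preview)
Your proposal is correct and follows the same route as the paper: reduce to the pure-state case via Proposition~\ref{prop:quasi-pure} and then invoke Corollary~\ref{cor:characterization}. The paper's proof is the one-line ``Follows from Proposition~\ref{prop:quasi-pure} and Corollary~\ref{cor:characterization}'', and your ``alternative'' remark---that the evolution times between $\rho,\varrho$ and between $|\phi_1\rangle,|\psi_1\rangle$ literally coincide---is exactly the intended bridge; taking that as primary lets you bypass the geodesic-distance and metric-choice issues you flag in your last paragraph, since no comparison of distances or energy uncertainties on the mixed-state orbit is actually needed.
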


\begin{proof}
    Follows from Proposition \ref{prop:quasi-pure} and Corollary \ref{cor:characterization}.
    
\end{proof}

\section{Conclusion and outlook}

In Corollary \ref{cor:characterization}, we present a characterization of the optimal-speed unitary time evolution that provides a deterministic criterion for this type of quantum time evolution. Our result also offers an elegant proof of the geodesicity of the dynamic trajectories generated by Hamiltonians that maximize the speed of evolution between two vector states. In this sense, we complement the result obtained by Cafaro and Alsing \cite{Cafaro}, as we observe that the dynamic trajectories generated by optimal-speed unitary time evolution are isometric images of equigeodesics; that is, the geodesicity of these curves is independent of the metric.\\

Additionally, we know that the geodesic dynamic trajectory connecting two vector states is entirely contained in the subspace generated by these two vector states \cite{Brody, Cafaro, AA}. 
In this sense, the problem for pure states reduces to the search for Hamiltonians generating dynamics on the Bloch sphere generated by the two state vectors.
In this work we show that for larger Hilbert spaces there is a large family of Hamiltonians which connect the two given pure states in optimal time.
In fact, we present the algebraic structure of these operators.\\

Using the theory of differential geometry of flag manifolds to study quantum dynamics reaffirms the fascinating connection between abstract mathematics and quantum theory. 
The geometric structure of flag manifolds offers a way to understand how quantum states evolve. This combination not only enhances our understanding of the mathematical principles involved but also highlights the beauty of how these ideas can help in solving problems in quantum dynamics. 
By linking these fields, we uncover new insights into quantum processes, showing how mathematical concepts can effectively address key questions in quantum theory.\\

In general, determining optimal-speed unitary time evolution and deciding on the geodesicity of their dynamic trajectories for non-pure quantum states is not a straightforward task. This is partly because, for varieties $D_{nk}$ with $k> 1$, we must consider that there exist infinitely many monotone Riemannian metrics \cite{MC, Petz}. Our work paves an interesting path in this direction. Looking ahead, a natural next step in generalizing the ideas of this paper is to consider states in the orbit of $D=\diag(p_1,...,p_1,p_2,...,p_2),$ where $p_1$ appears $k$ times. This is because the associated flag manifold $\SU(n)/\textnormal{S}(\textnormal{U}(k)\times \textnormal{U}(n-k))$ has a reductive decomposition $\mathfrak{su}(n)=\mathfrak{s}(\mathfrak{u}(k)\oplus\mathfrak{u}(n-k))\oplus\mathfrak{m},$ where $\mathfrak{m}$ is $\Ad^{\textnormal{S}(\textnormal{U}(k)\times \textnormal{U}(n-k))}$-irreducible. This decomposition allows us to characterize the equigeodesic vectors in a manner similar to that provided in Proposition \ref{general:equigeodesics}.

\section*{Acknowledgements}
John A. Mora Rodríguez is supported by the Coordenação de Aperfeiçoamento de Pessoal de Nível Superior (CAPES) - Finance Code 001. Brian Grajales is supported by grant 2023/04083-0 (São Paulo Research Foundation FAPESP). Marcelo Terra Cunha acknowledges partial support from CNPq, grant number 311314/2023-6. Lino Grama is partially supported by S\~ao Paulo Research Foundation FAPESP grants 2021/04065-6  and 2023/13131-8. This work is part of the Brazilian National Institute of Science and Technology in Quantum Information.

\bibliographystyle{apa}

\end{document}